\documentclass[11pt]{article}

\usepackage{amsmath, amssymb, amsthm, mathtools}
\usepackage[hidelinks]{hyperref}
\usepackage[margin=1in]{geometry}
\usepackage{natbib}
\usepackage{booktabs}
\usepackage{graphicx}

\newtheorem{theorem}{Theorem}
\newtheorem{proposition}[theorem]{Proposition}

\newtheorem{remark}[theorem]{Remark}

\newtheorem{corollary}[theorem]{Corollary}

\newcommand{\E}{\mathbb{E}}
\newcommand{\Var}{\mathrm{Var}}
\newcommand{\Cov}{\mathrm{Cov}}

\newcommand{\calX}{\mathcal{X}}

\title{On the Fluctuations of the Single-Letter $d$-Tilted Sum\\
for Binary Markov Sources}
\author{Bhaskar Krishnamachari\\[4pt]
{\normalsize Viterbi School of Engineering}\\
{\normalsize University of Southern California}\\[2pt]
{\normalsize \texttt{bkrishna@usc.edu}}}
\date{\today}

\begin{document}

\maketitle

\begin{abstract}
We study the source-side single-letter $d$-tilted sum for a stationary binary Markov chain under Hamming distortion, induced by the single-letter Blahut--Arimoto operating point computed from the stationary marginal $\pi$. We show that this quantity inherits the same algebraic structure as in the memoryless (i.i.d.) case: the centered sum $J_n(D)-n\mu_D$ is exactly an affine function of the chain's occupation count $N_n$, and consequently all centered cumulants are independent of the distortion level $D$. The exact finite-$n$ distribution therefore follows immediately from known results on occupation counts of two-state Markov chains. The genuinely new contributions of this note are (i) a closed-form expression for the finite-$n$ variance that includes the autocorrelation factor due to memory, and (ii) the transfer-matrix representation of the cumulant generating function. The connection, if any, between this source-side quantity and the operational finite-blocklength rate-distortion function remains open.
\end{abstract}

\section{Introduction}\label{sec:intro}

For memoryless sources under lossy compression, the minimum achievable
rate at blocklength~$n$, distortion~$D$, and excess-distortion
probability~$\varepsilon$ satisfies the normal approximation
\[
  R^*(n, D, \varepsilon) \approx R(D) + \sqrt{\frac{V(D)}{n}}\,
  Q^{-1}(\varepsilon),
\]
where $R(D)$ is the rate-distortion function, $\jmath(X,D)$ is the
single-letter $d$-tilted information, and
$V(D)=\Var[\jmath(X,D)]$ is the corresponding rate-dispersion function
\citep{kostina2012lossy, ingber2011dispersion, kostina2017low}.
Thus, in the memoryless setting, the second-order term is governed by
the fluctuations of the single-letter $d$-tilted information. This is
part of a broader programme of finite-blocklength normal
approximations in information theory
\citep{strassen1962asymptotic, polyanskiy2010channel}.
This result has been extended to Gauss--Markov sources
\citep{tian2019dispersion, tasci2026dispersion}, and normal
approximations are also known for finite-state sources in the lossless
setting \citep{kontoyiannis2014optimal, hayashi2020finite}.

Historically, source-side quantities equivalent in substance to the
modern single-letter $d$-tilted information, together with the
associated fluctuation variance, already appeared under different
notation and from a pointwise-redundancy perspective in
\citet{Kontoyiannis2000}. \citet{KontoyiannisZhang2002} subsequently
placed this viewpoint in a broader framework for arbitrary sources and
distortion measures.

For discrete finite-state Markov sources under lossy compression,
general finite-blocklength bounds apply
\citep{kostina2012lossy}; however, a \emph{sharp} second-order
characterization is still lacking. The first-order limit
$R(D) = \lim_{n\to\infty} \frac{1}{n} R_n(nD)$,
where $R_n(d)$ is the minimum achievable rate at blocklength~$n$
and total distortion~$d$,
exists \citep{gray1971rate, berger1971rate, zhou2023finite}, yet
whether a normal approximation governs the second-order term,
and if so what dispersion appears, remains open.

Once the single-letter identity $\jmath(x,D)=-\log_2\pi_x-h_2(D)$ is established (Proposition~\ref{prop:jtilt-identity}), several properties follow directly from the corresponding theory for i.i.d.\ Bernoulli($\pi_1$) sources. In particular, the centered block sum $J_n(D)-n\mu_D$ is an exact affine function of the occupation count $N_n$, and all centered cumulants are independent of $D$. The exact finite-$n$ distribution of $J_n(D)$ is therefore determined entirely by the distribution of $N_n$.

The main new technical results of this note are the closed-form expression for the finite-$n$ variance that accounts for the Markov autocorrelation (Theorem~\ref{thm:main}(iii)) and the transfer-matrix representation of the cumulant generating function (Theorem~\ref{thm:main}(v)). Whether this source-side quantity plays any role in the operational finite-blocklength rate $R^*(n, D, \varepsilon)$ remains open \citep{tasci2026dispersion, zhou2023finite}.

\section{Model and the Binary Hamming Identity}\label{sec:model}

\subsection{Binary Markov Source}

Consider a stationary binary Markov chain $\{X_t\}_{t \geq 1}$ with
state space $\calX = \{0, 1\}$ and transition matrix
\[
  P = \begin{pmatrix} 1-a & a \\ b & 1-b \end{pmatrix},
  \qquad 0 < a, b < 1,
\]
with stationary distribution
$\pi_0 = b/(a+b)$, $\pi_1 = a/(a+b)$. The second eigenvalue is
$\lambda_2 = 1 - a - b$, and the spectral gap is
$1 - |\lambda_2| = \min(a+b,\, 2-a-b)$.

The distortion measure is Hamming:
$d(x, \hat{x}) = \mathbf{1}\{x \neq \hat{x}\}$.
We work in the interior regime $0 < D < \min(\pi_0, \pi_1)$.

\subsection{Single-Letter $d$-Tilted Information}

The single-letter $d$-tilted information \citep{kostina2012lossy, kostina2017low}
at the Blahut--Arimoto (BA) operating point with
slope parameter $\beta > 0$ is
\begin{equation}\label{eq:tilted-info}
  \jmath(x, D)
  = -\log_2 \sum_{\hat{x}} q(\hat{x})\, e^{-\beta(d(x, \hat{x}) - D)},
\end{equation}
where $q(\hat{x}) = \sum_x \pi(x)\, Q(\hat{x}|x)$ is the output
marginal of the optimal single-letter test channel, and $\beta$
satisfies the BA fixed-point condition
$Q(\hat{x}|x) = q(\hat{x})\, e^{-\beta\, d(x,\hat{x})} / Z(x)$
with partition function
$Z(x) = \sum_{\hat{x}} q(\hat{x})\, e^{-\beta\, d(x,\hat{x})}$
\citep{berger1971rate}.

We define $\mu_D = \E_\pi[\jmath(X, D)]$,
the expected single-letter $d$-tilted information under the stationary distribution.

\begin{remark}\label{rem:mu-vs-rd}
The quantity $\mu_D = \E_\pi[\jmath(X, D)]$ is the rate-distortion
function of an \emph{i.i.d.}\ source with the same marginal
distribution~$\pi$, and in general differs from the rate-distortion
function of the Markov source. The single-letter BA iteration optimizes
$\min_{Q:\, \E[d] \leq D} I_\pi(X; \hat{X})$, which treats each
source symbol as drawn independently from~$\pi$
\citep{berger1971rate}. For ergodic sources with single-letter distortion, the true
rate-distortion function
$R(D) = \lim_{n\to\infty} \frac{1}{n} R_n(nD)$ exists
\citep{gray1971rate} and satisfies $R(D) \leq \mu_D$, and the inequality is often strict for correlated sources: the encoder can exploit source memory to
achieve rates below the single-letter bound. The objects in this
note ($\jmath(x,D)$, its BA operating point, and the resulting
fluctuation theory) are all defined through the single-letter
(i.i.d.-marginal) problem and do not involve the true Markov $R(D)$.
We use the notation $\mu_D$ throughout to keep this distinction
visible and to emphasize that our results concern the source-side
$d$-tilted process rather than the operational finite-blocklength rate
$R^*(n, D, \varepsilon)$.
\end{remark}

\subsection{The Binary Hamming Identity}

\begin{proposition}[Binary Hamming single-letter $d$-tilted information]\label{prop:jtilt-identity}
At the single-letter BA operating point for the binary Markov source
under Hamming distortion, in the interior regime
$0 < D < \min(\pi_0, \pi_1)$:
\begin{equation}\label{eq:jtilt-closed}
  \jmath(x, D) = -\log_2 \pi_x - h_2(D),
\end{equation}
where $h_2(p) = -p\log_2 p - (1-p)\log_2(1-p)$ is the binary entropy
function. In particular,
\begin{equation}\label{eq:jtilt-gap}
  \jmath(0, D) - \jmath(1, D) = \log_2\frac{a}{b},
\end{equation}
and
\begin{equation}\label{eq:mu}
  \mu_D = h_2(\pi_1) - h_2(D).
\end{equation}
\end{proposition}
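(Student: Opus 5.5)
The plan is to solve the single-letter Blahut--Arimoto fixed point for the binary Hamming problem with marginal $\pi$ in closed form. I will verify that it is consistent, substitute it into \eqref{eq:tilted-info}, and then read off \eqref{eq:jtilt-gap} and \eqref{eq:mu}. Throughout, $e^{-\beta(\cdot)}$ is taken with natural exponent, so only the final $\log_2$ changes base.

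\emph{Step 1 (candidate operating point).} Motivated by the standard reverse test channel $X=\hat X\oplus Z$ with $Z\sim\mathrm{Bern}(D)$ independent of $\hat X$, I take
\[
  q(1)=\frac{\pi_1-D}{1-2D},\qquad q(0)=\frac{\pi_0-D}{1-2D},\qquad \beta=\ln\frac{1-D}{D}.
\]
The hypothesis $0<D<\min(\pi_0,\pi_1)\le 1/2$ gives $q(0),q(1)\in(0,1)$ and $\beta>0$. I will then compute the partition function:
\[
  Z(x)=q(x)+q(1-x)\frac{D}{1-D}=\frac{q(x)(1-D)+q(1-x)D}{1-D}=\frac{\pi_x}{1-D}.
\]
The last equality is the identity $(1-D)q(x)+Dq(1-x)=\pi_x$, which is exactly the reverse-channel marginal condition and can be checked by a one-line computation.

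\emph{Step 2 (consistency).} With $Q(\hat x|x)=q(\hat x)e^{-\beta d(x,\hat x)}(1-D)/\pi_x$, I will check two things.
\begin{itemize}
\item The output marginal is reproduced: $\sum_x\pi_xQ(\hat x|x)=q(\hat x)(1-D)\bigl[1+\tfrac{D}{1-D}\bigr]=q(\hat x)$.
\item The distortion constraint is met with equality: $\sum_x\pi_xQ(1-x|x)=D\sum_x q(1-x)=D$.
\end{itemize}
Hence $(q,\beta)$ satisfies the BA fixed-point equations at distortion $D$. Strict convexity of $R_\pi(D)$ in the interior regime then ensures this is \emph{the} operating point, with $\beta=-R_\pi'(D)\ln 2$. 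Establishing this identification rigorously, and not merely exhibiting a fixed point, is the only delicate step. I would handle it by citing the standard parametric characterization in \citep{berger1971rate}, under which a fixed point with $q>0$ meeting the constraint with equality is optimal.

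\emph{Step 3 (evaluation).} Substituting into \eqref{eq:tilted-info} gives
\[
  \jmath(x,D)=-\log_2\bigl(Z(x)e^{\beta D}\bigr)=-\log_2\pi_x+\log_2(1-D)-D\log_2\tfrac{1-D}{D}.
\]
The last two terms combine to $-h_2(D)$, which yields \eqref{eq:jtilt-closed}. Taking the difference of the $x=0$ and $x=1$ cases gives $\log_2(\pi_1/\pi_0)=\log_2(a/b)$, which is \eqref{eq:jtilt-gap}. Averaging over $\pi$ gives $\mu_D=H(\pi)-h_2(D)=h_2(\pi_1)-h_2(D)$, which is \eqref{eq:mu}.
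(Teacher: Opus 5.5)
Your proposal is correct and follows essentially the same route as the paper. Both use the same $\beta=\ln\frac{1-D}{D}$ and output marginal $q$, the same simplification $Z(x)=\pi_x/(1-D)$, and the same evaluation giving $-\log_2\pi_x-h_2(D)$, from which \eqref{eq:jtilt-gap} and \eqref{eq:mu} are read off; your Step~2 (checking that $q$ is reproduced, that the distortion constraint holds with equality, and invoking the parametric optimality conditions of \citep{berger1971rate}) supplies a verification the paper simply asserts, so your version is slightly more complete.
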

\footnote{This identity is the standard closed-form expression for the single-letter $d$-tilted information of a binary source under Hamming distortion \citep{kostina2012lossy}. It is included here with a short proof for completeness and self-containedness. The same simplification holds for any i.i.d.\ Bernoulli source with success probability $\pi_1$; the Markov structure enters only through the distribution of the occupation count $N_n$ in Theorem~\ref{thm:main}.}

Intuitively, the $d$-tilted information separates into a
state-dependent log-marginal term and a distortion-only entropy,
collapsing $D$'s role to a constant.

\begin{proof}
For binary Hamming distortion, the BA slope parameter is
$\beta = \ln\frac{1-D}{D}$, and the BA output marginals are
\[
  q_0 = \frac{\pi_0 - D}{1 - 2D}, \qquad
  q_1 = \frac{\pi_1 - D}{1 - 2D}.
\]
The partition function evaluates to
$Z(x) = q_x + q_{1-x}\, e^{-\beta}$ with $e^{-\beta} = D/(1-D)$.
We compute:
\begin{align*}
  Z(x) &= \frac{\pi_x - D}{1-2D}
         + \frac{\pi_{1-x} - D}{1-2D} \cdot \frac{D}{1-D} \\
       &= \frac{(\pi_x - D)(1-D) + (\pi_{1-x} - D)\, D}{(1-2D)(1-D)}.
\end{align*}
The numerator simplifies using $\pi_{1-x} = 1 - \pi_x$:
\[
  (\pi_x - D)(1-D) + (1 - \pi_x - D)\, D
  = \pi_x - \pi_x D - D + D^2 + D - \pi_x D - D^2
  = \pi_x(1-2D).
\]
Therefore $Z(x) = \pi_x / (1-D)$, and
\begin{align*}
  \jmath(x,D) &= -\frac{\beta D}{\ln 2} - \log_2 Z(x)
  = -D\log_2\!\frac{1-D}{D} - \log_2\!\frac{\pi_x}{1-D} \\
  &= (1-D)\log_2(1-D) + D\log_2 D - \log_2 \pi_x
  = -\log_2 \pi_x - h_2(D).
\end{align*}
Identity~\eqref{eq:jtilt-gap} follows from
$\log_2(\pi_1/\pi_0) = \log_2(a/b)$.
For~\eqref{eq:mu}:
$\mu_D = \E_\pi[-\log_2\pi_X] - h_2(D) = H(\pi) - h_2(D)
= h_2(\pi_1) - h_2(D)$.
\end{proof}

\section{Main Theorem}\label{sec:main}

Define the block $d$-tilted sum (a random variable, since each
$X_t$ is random), the occupation count, and the log-ratio:
\[
  J_n(D) = \sum_{t=1}^n \jmath(X_t, D), \qquad
  N_n = \sum_{t=1}^n \mathbf{1}\{X_t = 1\}, \qquad
  \ell = \log_2\frac{a}{b}.
\]

\begin{theorem}[Exact finite-$n$ structure of the binary Hamming
$d$-tilted sum]\label{thm:main}
Let $\{X_t\}_{t \geq 1}$ be a stationary binary Markov chain with
parameters $(a, b)$ as above, and let $0 < D < \min(\pi_0, \pi_1)$.

\medskip\noindent
\textbf{(i) Occupation-count reduction.}
\begin{equation}\label{eq:reduction}
  J_n(D) = n\bigl(-\log_2\pi_0 - h_2(D)\bigr) - \ell\, N_n,
\end{equation}
and hence
\begin{equation}\label{eq:centered}
  J_n(D) - n\mu_D = -\ell\,(N_n - n\pi_1).
\end{equation}

\medskip\noindent
\textbf{(ii) Distortion-invariant cumulants.}
For every $m \geq 2$,
\begin{equation}\label{eq:cumulants}
  \kappa_m\!\bigl(J_n(D) - n\mu_D\bigr)
  = (-\ell)^m\, \kappa_m(N_n - n\pi_1).
\end{equation}
In particular, every centered fluctuation statistic of $J_n(D)$ is
independent of~$D$: the exact finite-$n$ distribution of $J_n(D)$ is
determined by the two-state Markov occupation-count law, which the
transfer matrix $P\,D(u)$ in part~(iv) computes naturally.

\medskip\noindent
\textbf{(iii) Exact finite-$n$ variance.}
\begin{equation}\label{eq:var-exact}
  \Var\bigl(J_n(D)\bigr)
  = \ell^2\, \pi_0 \pi_1
    \left[n + 2\sum_{k=1}^{n-1}(n-k)\,\lambda_2^k\right]
\end{equation}
and therefore
\begin{equation}\label{eq:var-closed}
  \Var\bigl(J_n(D)\bigr)
  = \ell^2\, \pi_0 \pi_1
    \left[\frac{n(1+\lambda_2)}{1-\lambda_2}
    - \frac{2\lambda_2(1-\lambda_2^n)}{(1-\lambda_2)^2}\right].
\end{equation}
Consequently, the per-letter variance converges to the
\emph{single-letter asymptotic variance}
\begin{equation}\label{eq:var-limit}
  \frac{1}{n}\,\Var\bigl(J_n(D)\bigr)
  \;\to\;
  V_{\mathrm{sl}}
  := \ell^2\, \pi_0\pi_1\,\frac{1+\lambda_2}{1-\lambda_2}
  = \frac{ab(2-a-b)}{(a+b)^3}\,\log_2^2\!\frac{a}{b}.
\end{equation}

\medskip\noindent
\textbf{(iv) Exact finite-$n$ distribution.}
If $a \neq b$, then $J_n(D)$ is an affine image of $N_n$:
\begin{equation}\label{eq:dist}
  \Pr\!\left(J_n(D) = n\bigl(-\log_2\pi_0 - h_2(D)\bigr) - \ell\, m
  \right) = \Pr(N_n = m),
  \qquad m = 0, \ldots, n.
\end{equation}
The probability generating function of~$N_n$ is
\begin{equation}\label{eq:pgf}
  G_n(u) = \E[u^{N_n}]
  = \pi^\top D(u)\, \bigl(P\, D(u)\bigr)^{n-1}\, \mathbf{1},
\end{equation}
where $D(u) = \mathrm{diag}(1, u)$ and $\mathbf{1} = (1,1)^\top$.

\medskip\noindent
\textbf{(v) Transfer-matrix cumulant generating function.}
The base-$2$ cumulant generating function of the centered sum is
\begin{equation}\label{eq:cgf-n}
  \Lambda_n(\theta)
  := \frac{1}{n}\,\log_2 \E\!\left[2^{\,\theta(J_n(D) - n\mu_D)}\right]
  = \theta\pi_1 \ell
  + \frac{1}{n}\,\log_2\!\bigl(
    \pi^\top D(u_\theta)\, (P\, D(u_\theta))^{n-1}\, \mathbf{1}
  \bigr),
\end{equation}
where $u_\theta = 2^{-\theta\ell}$.
In the limit,
\begin{equation}\label{eq:cgf-limit}
  \Lambda(\theta)
  := \lim_{n\to\infty} \Lambda_n(\theta)
  = \theta\pi_1\ell + \log_2 \lambda_+(u_\theta),
\end{equation}
where $\lambda_+(u)$ is the Perron root of the $2 \times 2$ transfer
matrix $P\, D(u)$:
\begin{equation}\label{eq:perron}
  \lambda_+(u)
  = \frac{(1-a) + (1-b)\,u
    + \sqrt{\bigl((1-a) - (1-b)\,u\bigr)^2 + 4ab\,u}}{2}.
\end{equation}
\end{theorem}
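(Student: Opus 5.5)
The plan is to reduce everything to the occupation count via Proposition~\ref{prop:jtilt-identity} and then compute with the two-state chain directly.

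\textbf{Part (i).} By Proposition~\ref{prop:jtilt-identity}, $\jmath(X_t,D) = -\log_2\pi_0 - h_2(D) - \ell\,\mathbf{1}\{X_t=1\}$. This holds because $-\log_2\pi_1 = -\log_2\pi_0 - \log_2(\pi_1/\pi_0)$ and $\pi_1/\pi_0 = a/b$. Summing over $t$ gives \eqref{eq:reduction}. Subtracting $n\mu_D$, with $\mu_D$ obtained by taking the stationary expectation of the same affine expression (so $\E N_n = n\pi_1$), gives \eqref{eq:centered}.

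\textbf{Parts (ii) and (iv).} Part (ii) follows from \eqref{eq:centered}. Cumulants of order $m\ge2$ are shift invariant and homogeneous of degree $m$ under scaling, and the shift $n(-\log_2\pi_0-h_2(D))$ is the only place $D$ enters, so the centered cumulants do not depend on $D$. Part (iv) holds because for $a\neq b$ we have $\ell\neq0$, so the map $m\mapsto n(\cdots)-\ell m$ is injective.

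For the PGF I will write $\E[u^{N_n}]=\sum_{x_1,\dots,x_n}\pi_{x_1}u^{x_1}\prod_{t\ge2}P_{x_{t-1}x_t}u^{x_t}$. This is exactly $\pi^\top D(u)(PD(u))^{n-1}\mathbf 1$.

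\textbf{Part (iii).} I will compute $\Var(N_n)=\sum_{s,t}\Cov(\mathbf 1\{X_s=1\},\mathbf 1\{X_t=1\})$. The key input is $P^k = \mathbf 1\pi^\top + \lambda_2^k(I-\mathbf 1\pi^\top)$, which follows from the spectral decomposition of a $2\times2$ stochastic matrix. From it, $\Pr(X_{t+k}=1\mid X_t=1)=\pi_1+\lambda_2^k\pi_0$, so the covariance at lag $k$ is $\pi_0\pi_1\lambda_2^k$. Counting the $n-k$ pairs at each lag $k\ge1$, together with the $n$ diagonal terms, gives \eqref{eq:var-exact} after multiplying by $\ell^2$ from part (ii). The closed form \eqref{eq:var-closed} comes from the standard identity
\[
\sum_{k=1}^{n-1}(n-k)\lambda^k=\frac{n\lambda}{1-\lambda}-\frac{\lambda(1-\lambda^n)}{(1-\lambda)^2}.
\]
This is obtained by writing $(n-k)$ as a sum of indicators and summing the geometric series; it is valid since $|\lambda_2|<1$. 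Dividing by $n$ and letting $n\to\infty$ gives $V_{\mathrm{sl}}$. Substituting $\pi_0\pi_1=ab/(a+b)^2$ and $(1+\lambda_2)/(1-\lambda_2)=(2-a-b)/(a+b)$ yields the explicit form in \eqref{eq:var-limit}.

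\textbf{Part (v).} Write $2^{\theta(J_n-n\mu_D)}=2^{-\theta\ell(N_n-n\pi_1)}=2^{n\theta\ell\pi_1}u_\theta^{N_n}$, then take $\tfrac1n\log_2$ and apply \eqref{eq:pgf}. For the limit, $PD(u)$ with $u>0$ is a positive $2\times2$ matrix. By Perron--Frobenius, $(PD(u))^{n-1}=\lambda_+^{n-1}(r\,l^\top+o(1))$ with $r,l$ positive eigenvectors. Since $\pi^\top D(u)$ and $\mathbf 1$ are positive vectors, the bilinear form is $\lambda_+^{n-1}c(1+o(1))$ with $c>0$, and the $\tfrac1n\log_2$ of it converges to $\log_2\lambda_+$. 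Finally, $PD(u)=\begin{pmatrix}1-a&au\\ b&(1-b)u\end{pmatrix}$ has trace $(1-a)+(1-b)u$ and determinant $(1-a-b)u$. The discriminant is $\mathrm{tr}^2-4\det=((1-a)-(1-b)u)^2+4abu$, which gives \eqref{eq:perron}.

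The main obstacle is modest: getting the exact lag-$k$ covariance and the finite-sum identity right, including the edge term $\lambda_2^n$. Positivity in the Perron step needs $u>0$, which holds automatically for $u_\theta$.
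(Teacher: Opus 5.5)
Your proposal is correct and follows the paper's proof essentially step for step. You reduce to $N_n$ via the binary Hamming identity, obtain the variance from the lag-$k$ covariance $\pi_0\pi_1\lambda_2^k$, expand over paths for the PGF, and take the Perron root of $P\,D(u)$ for the limiting CGF. Your extra details only make the same argument more complete: the spectral decomposition of $P^k$, the explicit check that $V_{\mathrm{sl}}=ab(2-a-b)\ell^2/(a+b)^3$, and the positivity of the leading coefficient $c$, which the paper's ``dominated by its Perron eigenvalue'' leaves implicit.
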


\begin{proof}
We prove each part.

\paragraph{Part (i).}
By Proposition~\ref{prop:jtilt-identity},
$\jmath(X_t, D) = -\log_2\pi_{X_t} - h_2(D)$.
Since $-\log_2\pi_{X_t} = -\log_2\pi_0
- (\log_2\pi_0 - \log_2\pi_1)\,\mathbf{1}\{X_t = 1\}
= -\log_2\pi_0 - \ell\,\mathbf{1}\{X_t = 1\}$,
summing over $t = 1, \ldots, n$ gives~\eqref{eq:reduction}.
For~\eqref{eq:centered}, note
$n\mu_D = n\,\E_\pi[\jmath] = n(-\log_2\pi_0 - h_2(D)) - \ell\, n\pi_1$,
so $J_n(D) - n\mu_D = -\ell\,(N_n - n\pi_1)$.

\paragraph{Part (ii).}
Since $J_n(D) - n\mu_D = -\ell\,(N_n - n\pi_1)$ is an affine function
of $N_n$, the cumulant scaling rule
$\kappa_m(cY + d) = c^m\, \kappa_m(Y)$ for $m \geq 2$
gives~\eqref{eq:cumulants} immediately. In particular, no cumulant
depends on~$D$.

\paragraph{Part (iii).}
The autocovariance of the indicator process is
\begin{equation}\label{eq:autocov}
  \Cov\bigl(\mathbf{1}\{X_s = 1\},\, \mathbf{1}\{X_{s+k} = 1\}\bigr)
  = \pi_0\pi_1\,\lambda_2^k, \qquad k \geq 0.
\end{equation}
This is a standard property of two-state Markov chains: since
$\Pr(X_{s+k} = 1 \mid X_s = 1) = \pi_1 + \pi_0\,\lambda_2^k$,
we have $\E[\mathbf{1}\{X_s=1\}\,\mathbf{1}\{X_{s+k}=1\}]
= \pi_1(\pi_1 + \pi_0\,\lambda_2^k) = \pi_1^2 + \pi_0\pi_1\,\lambda_2^k$,
giving covariance $\pi_0\pi_1\,\lambda_2^k$.

Therefore
\[
  \Var(N_n) = \sum_{s,t=1}^n \pi_0\pi_1\,\lambda_2^{|s-t|}
  = \pi_0\pi_1 \left[n + 2\sum_{k=1}^{n-1}(n-k)\,\lambda_2^k\right],
\]
and $\Var(J_n(D)) = \ell^2\,\Var(N_n)$
gives~\eqref{eq:var-exact}.

To obtain the closed form~\eqref{eq:var-closed}, let
$r = \lambda_2$ and compute:
\begin{align*}
  \sum_{k=1}^{n-1}(n-k)\, r^k
  &= n\sum_{k=1}^{n-1} r^k - \sum_{k=1}^{n-1} k\, r^k \\
  &= n \cdot \frac{r(1-r^{n-1})}{1-r}
     - \frac{r\bigl(1 - n\, r^{n-1} + (n-1)\, r^n\bigr)}{(1-r)^2}.
\end{align*}
Substituting and simplifying:
\begin{align*}
  n + 2\sum_{k=1}^{n-1}(n-k)\, r^k
  &= n + \frac{2nr(1-r^{n-1})}{1-r}
     - \frac{2r(1 - n r^{n-1} + (n-1) r^n)}{(1-r)^2} \\
  &= \frac{n(1-r)^2 + 2nr(1-r)(1-r^{n-1})
     - 2r(1 - nr^{n-1} + (n-1)r^n)}{(1-r)^2}.
\end{align*}
The numerator simplifies to
$n(1+r)(1-r) - 2r(1-r^n)$, giving
\[
  n + 2\sum_{k=1}^{n-1}(n-k)\, r^k
  = \frac{n(1+r)}{1-r} - \frac{2r(1 - r^n)}{(1-r)^2},
\]
which is~\eqref{eq:var-closed}.
The limit~\eqref{eq:var-limit} follows since the correction term
is $O(1)$ when $|r| < 1$.

\paragraph{Part (iv).}
Equation~\eqref{eq:dist} is immediate from~\eqref{eq:reduction}: since
$J_n(D)$ is an affine function of~$N_n$ (with slope $-\ell \neq 0$
when $a \neq b$), the two random variables determine each other.

For the probability generating function~\eqref{eq:pgf}, expand over
all paths of the chain. The matrix $P\, D(u)$ has entries
$[P\, D(u)]_{ij} = P_{ij}\, u^{\mathbf{1}\{j=1\}}$, so it encodes
both the transition probability and the factor $u$ for visiting
state~$1$. Concretely,
\[
  P\, D(u) = \begin{pmatrix} 1-a & au \\ b & (1-b)u \end{pmatrix}.
\]
Then
\begin{align*}
  \E[u^{N_n}]
  &= \sum_{x_1,\ldots,x_n} \pi_{x_1}\, u^{\mathbf{1}\{x_1=1\}}
     \prod_{t=1}^{n-1} P_{x_t x_{t+1}}\, u^{\mathbf{1}\{x_{t+1}=1\}} \\
  &= \sum_{x_1} [\pi^\top D(u)]_{x_1}
     \sum_{x_2,\ldots,x_n}
     \prod_{t=1}^{n-1} [P\, D(u)]_{x_t x_{t+1}} \\
  &= \pi^\top D(u)\, (P\, D(u))^{n-1}\, \mathbf{1}.
\end{align*}

\paragraph{Part (v).}
From~\eqref{eq:centered},
$2^{\theta(J_n(D) - n\mu_D)} = 2^{-\theta\ell(N_n - n\pi_1)}
= 2^{n\theta\ell\pi_1} \cdot (2^{-\theta\ell})^{N_n}$.
Taking expectations:
\[
  \E\!\left[2^{\theta(J_n - n\mu_D)}\right]
  = 2^{n\theta\ell\pi_1}\, G_n(u_\theta)
\]
with $u_\theta = 2^{-\theta\ell}$.
Taking $(1/n)\log_2$ gives~\eqref{eq:cgf-n}.

For the limit~\eqref{eq:cgf-limit}: as $n \to \infty$,
$(P\, D(u))^{n-1}$ is dominated by its Perron eigenvalue,
giving $\tfrac{1}{n}\log_2 G_n(u) \to \log_2\lambda_+(u)$.
The Perron root is the largest eigenvalue of
\[
  P\, D(u) = \begin{pmatrix} 1-a & au \\ b & (1-b)u \end{pmatrix},
\]
whose characteristic polynomial is
$\lambda^2 - [(1-a) + (1-b)u]\,\lambda + u(1-a-b) = 0$,
giving~\eqref{eq:perron}.
One verifies $\lambda_+(1) = 1$ (since $P$ is stochastic),
confirming $\Lambda(0) = 0$. \qed
\end{proof}

\begin{remark}[Where the Markov structure enters]\label{rem:markov-vs-iid}
Parts~(i), (ii), and~(iv) of Theorem~\ref{thm:main} follow directly from the corresponding results for an i.i.d.\ Bernoulli($\pi_1$) source once the identity in Proposition~\ref{prop:jtilt-identity} is established; they use only the marginal~$\pi$ and the algebraic collapse $\jmath(x,D)=-\log_2\pi_x-h_2(D)$. The genuinely novel contributions are the autocorrelation-adjusted variance formula in~(iii) and the transfer-matrix cumulant generating function in~(v).

The Markov content lives entirely in the \emph{distribution of~$N_n$}: for an i.i.d.\ source, $N_n \sim \mathrm{Binomial}(n,\pi_1)$ with $\Var(N_n) = n\pi_0\pi_1$; for a Markov chain, the autocorrelation
$\Cov(\mathbf{1}_{\{X_s=1\}},\, \mathbf{1}_{\{X_{s+k}=1\}})
= \pi_0\pi_1\lambda_2^k$
changes the variance by the factor $(1+\lambda_2)/(1-\lambda_2)$.
Parts~(iii)--(v) (the finite-$n$ variance formula, the transfer-matrix
PGF, and the Perron-root CGF) are genuinely Markov-specific;
all three degenerate to their binomial counterparts when $\lambda_2 = 0$.
Many different parameter pairs $(a,b)$ can share the same
stationary probability~$\pi_1 = a/(a+b)$ while having different
$\lambda_2 = 1-a-b$, yielding dramatically different fluctuation
behaviour (see the example in \S\ref{sec:corollaries}).
Once the reduction~\eqref{eq:centered} is established,
parts~(iii)--(v) follow from standard transfer-matrix and
occupation-count analysis for two-state Markov chains
\citep{dembo1998large}; the key observation of this note is the
algebraic collapse itself.
\end{remark}

\section{Corollaries}\label{sec:corollaries}

\subsection{CLT and Berry--Esseen Bound}

The CLT for $J_n(D)$ is an immediate consequence of the
occupation-count reduction.

\begin{corollary}[CLT for the $d$-tilted sum]\label{cor:clt}
If $a \neq b$, then as $n \to \infty$,
\begin{equation}\label{eq:clt}
  \frac{J_n(D) - n\mu_D}{\sqrt{n}}
  \;\xrightarrow{d}\;
  \mathcal{N}(0,\, V_{\mathrm{sl}}),
\end{equation}
where $V_{\mathrm{sl}} = \frac{ab(2-a-b)}{(a+b)^3}\,\log_2^2\frac{a}{b}$
is independent of~$D$.
Moreover,
$\sup_u |\Pr[(J_n - n\mu_D)/\sqrt{nV_{\mathrm{sl}}} \leq u]
- \Phi(u)| \leq C/\sqrt{n}$,
where $C$ depends on $(a,b)$ but not on~$n$ or~$D$.
\end{corollary}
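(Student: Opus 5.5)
The proof reduces everything to the occupation count. By Theorem~\ref{thm:main}(i), $J_n(D)-n\mu_D=-\ell\,(N_n-n\pi_1)$. When $a\neq b$ we have $\ell\neq 0$, so it suffices to prove a CLT and a Berry--Esseen bound for $S_n:=N_n-n\pi_1=\sum_{t=1}^n f(X_t)$ with $f(x)=\mathbf{1}\{x=1\}-\pi_1$.

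The sign flip $-\ell$ is harmless for the Gaussian limit. For the Kolmogorov distance it costs at most a factor of $2$: if $Y=-Z$, then $\sup_u|\Pr[Y\le u]-\Phi(u)|$ is controlled by $\sup_u|\Pr[Z< u]-\Phi(u)|$, and by taking limits this is at most $\sup_u|\Pr[Z\le u]-\Phi(u)|$. The normalization is consistent, since by Theorem~\ref{thm:main}(iii) $\Var(J_n)=\ell^2\Var(N_n)$, and
\[
\frac{1}{n}\Var(N_n)\;\to\;\sigma^2:=\pi_0\pi_1\,\frac{1+\lambda_2}{1-\lambda_2}>0,
\]
so $V_{\mathrm{sl}}=\ell^2\sigma^2>0$.

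\textbf{CLT.} The chain has finitely many states, is irreducible, and is aperiodic because $0<a,b<1$. It is therefore uniformly (geometrically) ergodic, with $|\lambda_2|<1$. The standard CLT for bounded additive functionals of such chains gives $S_n/\sqrt n\xrightarrow{d}\mathcal N(0,\sigma^2)$, where $\sigma^2$ is the asymptotic variance computed above.

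An alternative route uses only results already in the paper. By Theorem~\ref{thm:main}(v), $\frac1n\log\E[e^{sS_n/\sqrt n}]$ is expressed through the Perron root $\lambda_+$, which is analytic near $u=1$. Expanding it to second order gives $\E[e^{sS_n/\sqrt n}]\to e^{s^2\sigma^2/2}$. The second derivative of the limiting CGF at $0$ equals $\sigma^2$, which is consistent with (iii). This is the Nagaev--Guivarc'h perturbation method, and it yields the CLT without any outside input.

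\textbf{Berry--Esseen.} This step is the main obstacle. I would first try the spectral (Nagaev--Guivarc'h) approach applied to characteristic functions. Write
\[
\E[e^{i\tau S_n}]=\pi^\top D_\tau\,(PD_\tau)^{n-1}\mathbf 1\cdot e^{-in\tau\pi_1},
\]
with $D_\tau=\mathrm{diag}(1,e^{i\tau})$. For $|\tau|\le\delta$, the matrix $PD_\tau$ splits as $\lambda_+(e^{i\tau})\Pi_\tau+R_\tau$. Here $\|R_\tau^n\|\le C\rho^n$ for some $\rho<1$, and the projection $\Pi_\tau$ depends analytically on $\tau$. Expanding $\log\lambda_+(e^{i\tau})$ gives
\[
i\tau\pi_1-\tfrac{\sigma^2\tau^2}{2}+O(|\tau|^3).
\]
It follows that $|\E[e^{i\tau S_n/\sqrt n}]-e^{-\sigma^2\tau^2/2}|\le C(|\tau|^3/\sqrt n+1/n)e^{-c\tau^2}$ for $|\tau|\le\delta\sqrt n$. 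Feeding this into Esseen's smoothing inequality with cutoff $T=\delta\sqrt n$ gives the bound $C/\sqrt n$.

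One technical point needs care. $S_n$ is lattice-valued with span $1$, so the smoothing must stop at $T\asymp\sqrt n$, and the $1/T$ term is then $O(1/\sqrt n)$, which is exactly what we want. This means we never need to control the characteristic function away from $\tau=0$.

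If the perturbation bookkeeping becomes heavy, I would instead cite the Berry--Esseen theorem for bounded functionals of uniformly ergodic Markov chains (Bolthausen's, or one of the regeneration-based versions). The constant depends only on the chain, i.e.\ on $(a,b)$, and on $\sigma$.

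Finally, $C$ does not depend on $D$. The normalized quantity $(J_n-n\mu_D)/\sqrt{nV_{\mathrm{sl}}}=-\operatorname{sign}(\ell)\,S_n/\sqrt{n\sigma^2}$ does not involve $D$ at all, so neither does any bound on its distribution.
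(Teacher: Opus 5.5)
Your proposal is correct, and your fallback is exactly the paper's proof. After the reduction $J_n(D)-n\mu_D=-\ell(N_n-n\pi_1)$, the paper cites Gordin for the CLT and Bolthausen for the Berry--Esseen rate, reads off the variance from part (iii), and observes that the normalized sum does not involve $D$.

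Where you genuinely differ is your primary Berry--Esseen route: a Nagaev--Guivarc'h perturbation of the $2\times2$ matrix $PD_\tau$, fed into Esseen's smoothing inequality with $T=\delta\sqrt n$. This route is self-contained and reuses the paper's own transfer-matrix machinery from parts (iv)--(v). It also makes the dependence of $C$ on $(a,b)$ explicit, through $\sigma^2$, $\delta$, and the bound $|\lambda_-|\le\rho<1$ on the second eigenvalue of $PD_\tau$ for $|\tau|\le\delta$. The citation route buys brevity. Both of your spectral arguments, including the MGF route to the CLT, need the finite-$n$ decomposition $(PD_\tau)^{n-1}=\lambda_+^{n-1}\Pi_\tau+\lambda_-^{n-1}(I-\Pi_\tau)$ uniformly near $\tau=0$, not just the fixed-$\theta$ limit stated in part (v). For a $2\times2$ matrix this is elementary.

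One bookkeeping slip needs fixing. An error bound of the form $C(|\tau|^3/\sqrt n+1/n)e^{-c\tau^2}$ cannot be fed into Esseen's inequality, because $\int_{-T}^{T}|\tau|^{-1}n^{-1}e^{-c\tau^2}\,d\tau$ diverges at $\tau=0$. The actual error vanishes at the origin, for two reasons:
\begin{itemize}
\item The non-Perron term is $O(\rho^{n}|\tau|/\sqrt n)$, since $\Pi_0=\mathbf 1\pi^\top$ gives $(I-\Pi_0)\mathbf 1=0$.
\item The $O(\tau/\sqrt n)$ contributions from the Perron prefactor and from the mismatch between the power $n-1$ and the centering $n\pi_1$ cancel, because $\E[S_n]=0$.
\end{itemize}
So for $|\tau|\le\delta\sqrt n$ the correct bound is $C(|\tau|^3/\sqrt n+\tau^2/n)e^{-c\tau^2}+C\rho^n|\tau|/\sqrt n$. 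With that, your $C/\sqrt n$ conclusion follows.

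A smaller point: the sign flip costs nothing, not a factor of $2$. Your own limiting argument shows the Kolmogorov distance to $\Phi$ is unchanged, because $\Phi$ is continuous and symmetric.
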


\begin{proof}
Since $J_n(D) - n\mu_D = -\ell\,(N_n - n\pi_1)$ and $N_n$ is an
additive functional of the ergodic Markov chain,
the CLT follows from \citet{gordin1969clt} and
the Berry--Esseen rate from \citet{bolthausen1980berry}.
The asymptotic variance $V_{\mathrm{sl}}$
is~\eqref{eq:var-limit}. Since neither $\ell$ nor the distribution
of $N_n$ depends on~$D$, the constant~$C$ is
$D$-independent as well.
\end{proof}

\subsection{Symmetric Chains}

\begin{corollary}[Symmetry]\label{cor:symmetry}
If $a = b$ (symmetric chain with $\pi = (1/2, 1/2)$), then
$\ell = 0$, and hence $J_n(D) = n\mu_D$ almost surely. In particular,
$V_{\mathrm{sl}} = 0$ and all centered cumulants vanish: the
single-letter $d$-tilted information is constant across states.
\end{corollary}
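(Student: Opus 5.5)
The plan is to reduce everything to the occupation-count identity \eqref{eq:centered} of Theorem~\ref{thm:main}(i). First, when $a=b$ the stationary distribution is $\pi_0=b/(a+b)=1/2=\pi_1$. Hence $\ell=\log_2(a/b)=0$. Equivalently, Proposition~\ref{prop:jtilt-identity} gives $\jmath(0,D)=\jmath(1,D)=1-h_2(D)$, which is the statement that the single-letter $d$-tilted information is constant across states.

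Next, substituting $\ell=0$ into \eqref{eq:centered} gives $J_n(D)-n\mu_D=-0\cdot(N_n-n\pi_1)=0$ identically, for every realization of the chain. So $J_n(D)=n\mu_D$ holds surely, not merely almost surely. Here $\mu_D=1-h_2(D)$ by \eqref{eq:mu}.

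The cumulant statement then follows from part~(ii). By \eqref{eq:cumulants}, $\kappa_m(J_n(D)-n\mu_D)=0^m\kappa_m(N_n-n\pi_1)=0$ for every $m\ge 2$. One can also see this directly, since all cumulants of order at least $2$ of a degenerate random variable vanish. The first centered cumulant is $0$ trivially. For the variance constant, set $\ell=0$ in \eqref{eq:var-limit}. Alternatively, note that the factor $\log_2^2(a/b)=0$ in the closed form, while $ab(2-a-b)/(a+b)^3$ stays finite because $0<a,b<1$. Either way, $V_{\mathrm{sl}}=0$.

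I expect no real obstacle here; the only point to flag is that Corollary~\ref{cor:clt} excluded $a=b$ precisely because the Gaussian limit degenerates to a point mass. I would add a remark that in this case \eqref{eq:dist} no longer determines $N_n$ from $J_n(D)$: the affine map has slope zero, so $J_n$ carries no information about the occupation count.
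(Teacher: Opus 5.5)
Your proposal is correct and follows the paper's argument: $a=b$ forces $\pi_0=\pi_1=1/2$, so $\jmath(0,D)=\jmath(1,D)=1-h_2(D)$ by \eqref{eq:jtilt-closed}, equivalently $\ell=0$ in \eqref{eq:centered}, and the centered sum vanishes identically. Your extra details (the identity holds surely, and the cumulant and $V_{\mathrm{sl}}$ claims follow from \eqref{eq:cumulants} and \eqref{eq:var-limit}) spell out what the paper leaves implicit.
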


\begin{proof}
When $a = b$, $\pi_0 = \pi_1 = 1/2$, so
$\jmath(0,D) = \jmath(1,D) = 1 - h_2(D)$ by~\eqref{eq:jtilt-closed}.
The centered sum is identically zero.
\end{proof}

\subsection{Finite-$n$ Correction to the Variance}

Equation~\eqref{eq:var-closed} gives the exact pre-asymptotic
correction:
\begin{equation}\label{eq:var-correction}
  \Var\bigl(J_n(D)\bigr)
  = n\, V_{\mathrm{sl}}
  - \frac{2\ell^2\,\pi_0\pi_1\,\lambda_2(1-\lambda_2^n)}{(1-\lambda_2)^2}.
\end{equation}
The correction term is $O(1)$ and negative when $\lambda_2 > 0$
(the typical case $a+b < 1$ of positively correlated chains).
For fast-mixing chains ($|\lambda_2|$ small), the correction is
negligible even at moderate~$n$; for slow-mixing chains
($|\lambda_2|$ near~$1$), it can be substantial.
When $\lambda_2 < 0$ (anti-correlated chains, arising when
$a + b > 1$), the amplification factor $(1+\lambda_2)/(1-\lambda_2)$
is less than~$1$, so source memory can also \emph{suppress}
variance relative to the i.i.d.\ baseline.

\begin{remark}[Large deviations and saddlepoint approximations]
The Perron root $\lambda_+(u)$ of the tilted matrix $P\,D(u)$
characterizes the large deviation principle for $J_n(D)$: the rate
function is the Legendre--Fenchel transform
$I(x) = \sup_u \{ux - \log \lambda_+(u)\}$
\citep[Ch.~3]{dembo1998large}. The exact finite-$n$
CGF~\eqref{eq:cgf-n} and its limit~\eqref{eq:cgf-limit} further
reduce tail probability computations to one-dimensional saddlepoint
methods \citep{daniels1954saddlepoint} applied to the $2 \times 2$
transfer matrix $P\,D(u)$.
\end{remark}

\subsection{Example}

Consider the asymmetric chain $a = 0.1$, $b = 0.3$, giving
$\pi = (3/4, 1/4)$, $\lambda_2 = 0.6$, and
$\ell = \log_2(1/3) \approx -1.585$.
Then:
\[
  V_{\mathrm{sl}}
  = \frac{0.1 \cdot 0.3 \cdot 1.6}{0.4^3}\,\log_2^2\!\frac{1}{3}
  \approx 1.884.
\]
The exact finite-$n$ variances from~\eqref{eq:var-closed} are:
\begin{center}
\begin{tabular}{r|cccccc}
$n$ & 1 & 2 & 5 & 10 & 50 & $\infty$ \\
\hline
$\Var(J_n)/n$ & 0.471 & 0.754 & 1.232 & 1.533 & 1.813 & 1.884
\end{tabular}
\end{center}
The correction in~\eqref{eq:var-correction} is
$C(1-\lambda_2^n)$ with
$C = 2\ell^2\pi_0\pi_1\lambda_2/(1-\lambda_2)^2 \approx 3.53$;
it converges to the constant~$C$ as $n \to \infty$.
Consequently, $\Var(J_n)/n$ approaches $V_{\mathrm{sl}}$ at rate
$O(1/n)$: the dominant correction is $C/n$, with an exponentially
small refinement $C\lambda_2^n/n$.
The value $\Var(J_1)/1 = 0.471 = \ell^2\pi_0\pi_1 = \Var_\pi(\jmath)$
is the i.i.d.\ variance, and the Markov amplification factor
$(1+\lambda_2)/(1-\lambda_2) = 4.0$ gives the limiting ratio
$V_{\mathrm{sl}}/\Var_\pi(\jmath) = 4.0$.

\begin{figure}[t]
\centering
\includegraphics[width=0.7\textwidth]{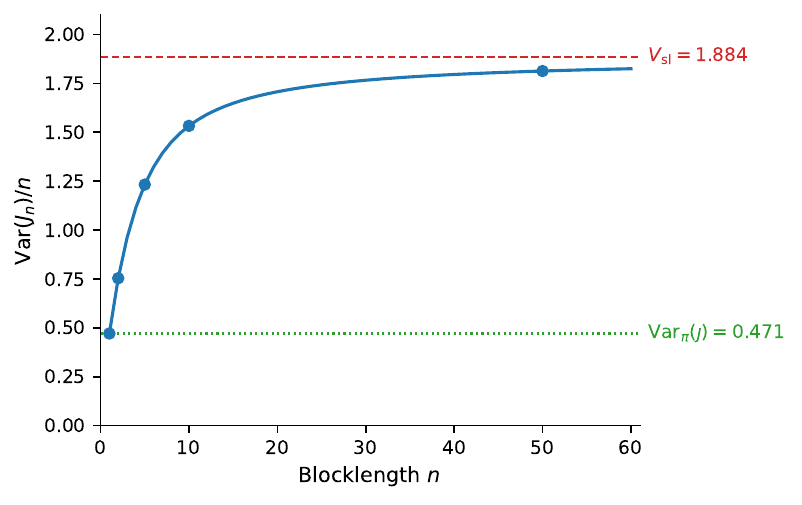}
\caption{Per-letter variance $\Var(J_n)/n$ as a function of
  blocklength~$n$ for the chain $a = 0.1$, $b = 0.3$.
  The curve converges to $V_{\mathrm{sl}} \approx 1.884$ (dashed) at
  rate $O(1/n)$ from the i.i.d.\ baseline
  $\Var_\pi(\jmath) = 0.471$ (dotted).}
\label{fig:var-convergence}
\end{figure}

\paragraph{Same marginal, different dynamics.}
As noted in Remark~\ref{rem:markov-vs-iid}, the stationary
distribution alone does not determine $V_{\mathrm{sl}}$.
To illustrate, compare three sources that all have $\pi_1 = 1/4$.
For binary Hamming distortion, the rate-redundancy gap
$\mu_D - R(D) = H(\pi) - H_{\mathrm{rate}}$
is independent of~$D$, since both $\mu_D$ and $R(D)$ contain the
same $-h_2(D)$ term (Proposition~\ref{prop:jtilt-identity})
and $H_{\mathrm{rate}} = \pi_0\, h_2(a) + \pi_1\, h_2(b)$:
\begin{center}
\begin{tabular}{lcccccc}
\toprule
Source & $a$ & $b$ & $\lambda_2$ & $\mu_D - R(D)$
  & $V_{\mathrm{sl}}$ & amplification \\
\midrule
i.i.d.\ $\mathrm{Bernoulli}(1/4)$ & $0.25$ & $0.75$ & $0$
  & $0$ & $0.471$ & $1\times$ \\
Markov (moderate memory) & $0.1$ & $0.3$ & $0.6$
  & $0.239$ & $1.884$ & $4\times$ \\
Markov (strong memory) & $0.01$ & $0.03$ & $0.96$
  & $0.702$ & $23.08$ & $49\times$ \\
\bottomrule
\end{tabular}
\end{center}
All three share $\ell = \log_2(1/3)$ and
$\Var_\pi(\jmath) = 0.471$, yet both the gap and $V_{\mathrm{sl}}$
grow dramatically with~$\lambda_2$.  The gap measures the rate savings
available to an encoder that exploits source memory, and it grows in
tandem with $V_{\mathrm{sl}}$
(the former spanning $0$ to $0.70$~bits, the latter nearly two orders of
magnitude).  Stronger memory simultaneously enlarges
the benefit of memory-aware coding and amplifies the fluctuations of
the $d$-tilted sum, making second-order analysis increasingly critical.

\section{Conclusion}\label{sec:discussion}

\subsection{What This Note Establishes}

Theorem~\ref{thm:main} gives an exact finite-$n$ description of
$J_n(D)$ for the binary Markov/Hamming model. The key
structural fact is the reduction to occupation
counts~\eqref{eq:centered}: the centered $d$-tilted sum is an affine
image of the chain's state-$1$ count. This yields the exact
distribution, stronger than a CLT since the full
finite-$n$ law is available, not just the Gaussian limit.
Distortion invariance (part~(ii)) is a direct consequence: $D$~enters
only through $-nh_2(D)$ in~\eqref{eq:reduction}, which cancels
upon centering.

The transfer-matrix CGF (part~(v)) connects the $d$-tilted sum to
classical tools from statistical mechanics and large deviations for
Markov chains. The Perron root $\lambda_+(u)$ governs the exponential
rate of tail probabilities, and the full CGF is available in closed
form via a $2 \times 2$ eigenvalue problem.

\subsection{What This Note Does Not Establish}

General finite-blocklength achievability and converse bounds
\citep{kostina2012lossy} apply to discrete Markov sources;
however, a sharp second-order (normal-approximation) characterization
remains open. In particular, the operational dispersion for discrete
finite-state Markov sources under lossy compression remains
unidentified, despite progress on Gaussian cases
\citep{tian2019dispersion}. The quantity $V_{\mathrm{sl}}$ studied here is the
asymptotic variance of a source-side additive functional; without
a Markov-specific coding theorem, its operational significance is
unknown.

For memoryless sources, \citet{kostina2012lossy} showed that the
operational dispersion equals $V(D) = \Var[\jmath(X,D)]$, and the
$d$-tilted threshold plays a central role in the achievability proof.
For Gauss--Markov sources, the second-order theory has been
established with an operational dispersion that involves $n$-letter
informational quantities and the joint source-reproduction law
\citep{tian2019dispersion, tasci2026dispersion}. For discrete Markov
sources, the block-optimal test channel $Q(\hat{x}^n | x^n)$ introduces
correlations across time steps that are not captured by the
single-letter $d$-tilted information, suggesting that the operational
problem is substantially harder.

The results of this note provide an exact fluctuation theory for the
single-letter $d$-tilted sum under Markov sources. Whether this
source-side quantity plays any role in the operational
finite-blocklength problem remains entirely open.

\subsection{Open Questions}

\begin{enumerate}
\item \textbf{Finite-blocklength coding for binary Markov sources.}
  Does a normal approximation hold for $R^*(n, D, \varepsilon)$ with
  discrete Markov sources? If so, what dispersion quantity governs the
  second-order term, and does $V_{\mathrm{sl}}$ play any operational
  role?

\item \textbf{Beyond binary chains.}
  For chains on larger alphabets, $\jmath(\cdot, D)$ takes more than
  two values, and the occupation-count reduction no longer gives an
  affine transform of a single scalar count. The multivariate
  occupation vector $(N_n^{(0)}, \ldots, N_n^{(M-1)})$ plays the
  analogous role, with the transfer matrix becoming $M \times M$.

\item \textbf{Non-Hamming distortion.}
  The identity $\jmath(x,D) = -\log_2\pi_x - h_2(D)$ is specific to
  binary Hamming distortion. For other distortion measures, $\jmath$
  may depend on~$D$ in a state-dependent way, breaking the
  distortion invariance of centered cumulants.
\end{enumerate}

\section*{Acknowledgement}

The development of this technical note and the associated code for plotting and numerical calculations has involved the use of several AI tools/agents including Claude Code, ChatGPT, and Gemini. The human author accepts full responsibility for its contents.

\bibliographystyle{plainnat}
\bibliography{refs}

\end{document}